\newcommand{\pto}{\xrightharpoondown{}}
\newcommand{\set}[1]{\left\{ #1 \right\}}
\DeclareMathOperator{\FRuns}{Paths}
\newcommand{\Runs}{\Omega}
\newcommand{\ETotal}{\mathsf{ETotal}}
\newcommand{\Total}{\mathsf{Total}}
\newcommand{\EDisct}{\mathsf{EDisct}}
\newcommand{\Disct}{\mathsf{Disct}}
\DeclareMathOperator{\PSat}{\mathsf{PSyn}}
\DeclareMathOperator{\PSemSat}{\mathsf{PSem}}
\newcommand{\f}{{\mu}} 
\newcommand{\Aa}{\mathcal{A}}
\newcommand{\Mm}{\mathcal{M}}
\newcommand{\Ff}{\mathcal{F}}
\newcommand{\eE}{\mathbb E}
\newcommand{\Real}{\mathbb R}
\newcommand{\DIST}{{\cal D}}
\DeclareMathOperator{\supp}{supp}
\DeclareMathOperator{\last}{last}
\DeclareMathOperator{\infi}{inf}
\title{Reward Shaping for Reinforcement Learning with Omega-Regular Objectives}
\author{Ernst Moritz Hahn\inst{1,2} \and Mateo Perez\inst{3}
  \and Sven Schewe\inst{4} \and \\  Fabio Somenzi\inst{3} \and
  Ashutosh Trivedi\inst{3} \and Dominik Wojtczak\inst{4}}
\institute{
  School of EEECS, Queen’s University Belfast, UK
  \and
  State Key Laboratory of Computer Science, Institute of Software, CAS, PRC
  \and
  University of Colorado Boulder, USA
  \and
  University of Liverpool, UK
}
\authorrunning{E. M. Hahn, M. Perez, S. Schewe, F. Somenzi, A. Trivedi, and D. Wojtczak}
\begin{document}
\maketitle

\begin{abstract}
  Recently, successful approaches have been made to exploit good-for-MDPs automata---B\"uchi automata with a restricted form of nondeterminism---for model free reinforcement learning, a class of automata that subsumes good for games automata and the most widespread class of limit deterministic automata~\cite{Hahn20}.
  The foundation of using these B\"uchi automata is that the B\"uchi condition can, for good-for-MDP automata, be translated to reachability~\cite{Hahn19}.
  
  The drawback of this translation is that the rewards are, on average, reaped very late, which requires long episodes during the learning process. We devise a new reward shaping approach that overcomes this issue.
  We show that the resulting a model is equivalent to a discounted payoff objective with a biased discount that simplifies and improves 
  on \cite{DBLP:journals/corr/abs-1909-07299}.

\end{abstract}

\section{Preliminaries}

A \emph{nondeterministic B\"uchi automaton} is a tuple
${\cal A} = \langle \Sigma,Q,q_0,\Delta,\Gamma \rangle$, where
$\Sigma$ is a finite \emph{alphabet}, $Q$ is a finite set of
\emph{states}, $q_0 \in Q$ is the \emph{initial state},
$\Delta \subseteq Q \times \Sigma \times Q$ are transitions, and
$\Gamma \subseteq Q \times \Sigma \times Q$ is the transition-based
\emph{acceptance condition}.

A \emph{run} $r$ of ${\cal A}$ on $w \in \Sigma^\omega$ is an
$\omega$-word $r_0, w_0, r_1, w_1, \ldots$ in
$(Q \times \Sigma)^\omega$ such that $r_0 = q_0$ and, for $i > 0$, it
is $(r_{i-1},w_{i-1},r_i)\in \Delta$.  We write $\infi(r)$ for the set
of transitions that appear infinitely often in the run $r$.  A run $r$
of ${\cal A}$ is \emph{accepting} if
$\infi(r) \cap \Gamma \neq \emptyset$.

The \emph{language}, $L_{\mathcal{A}}$, of ${\cal A}$ (or,
\emph{recognized} by ${\cal A}$) is the subset of words in
$\Sigma^\omega$ that have accepting runs in ${\cal A}$.  A language is
$\omega$-\emph{regular} if it is accepted by a B\"uchi automaton.  An
automaton ${\cal A} = \langle\Sigma,Q,Q_0,\Delta,\Gamma\rangle$ is
\emph{deterministic} if $(q,\sigma,q'),(q,\sigma,q'') \in \Delta$
implies $q'=q''$.  ${\cal A}$ is \emph{complete} if, for all
$\sigma \in \Sigma$ and all $q \in Q$, there is a transition
$(q,\sigma,q')\in \Delta$.  A word in $\Sigma^\omega$ has exactly one
run in a deterministic, complete automaton.

A \emph{Markov decision process (MDP)} $\Mm$ is a tuple
$(S, A, T, \Sigma, L)$ where $S$ is a finite set of states, $A$ is a
finite set of \emph{actions}, $T: S \times A \to \DIST(S)$, where
$\DIST(S)$ is the set of probability distributions over $S$, is the
\emph{probabilistic transition function}, $\Sigma$ is an alphabet, and
$L: S \times A \times S \to \Sigma$ is the \emph{labelling function} of
the set of transitions.  For a state $s \in S$, $A(s)$ denotes the set
of actions available in $s$.  For states $s, s' \in S$ and
$a \in A(s)$, we have that $T(s, a)(s')$ equals $\Pr{}(s' | s, a)$.

A \emph{run} of $\Mm$ is an $\omega$-word
$s_0, a_1, \ldots \in S \times (A \times S)^\omega$ such that
$\Pr{}(s_{i+1} | s_{i}, a_{i+1}) > 0$ for all $i \geq 0$.  A finite
run is a finite such sequence.  For a \emph{run}
$r = s_0, a_1, s_1, \ldots$ we define the corresponding labelled
run as
$L(r) = L(s_0,a_1,s_1), L(s_1,a_2,s_2), \ldots \in
\Sigma^\omega$.  We write $\Runs(\Mm)$ ($\FRuns(\Mm)$) for the set of
runs (finite runs) of $\Mm$ and $\Runs_s(\Mm)$ ($\FRuns_s(\Mm)$) for
the set of runs (finite runs) of $\Mm$ starting from state $s$.  When
the MDP is clear from the context we drop the argument $\Mm$.

A strategy in $\Mm$ is a function $\f : \FRuns \to \DIST(A)$ that for all finite runs $r$ we have
$\supp(\f(r)) \subseteq A(\last(r))$, where $\supp(d)$ is the support
of $d$ and $\last(r)$ is the last state of $r$.  Let $\Runs^\f_s(\Mm)$
denote the subset of runs $\Runs_s(\Mm)$ that correspond to strategy
$\f$ and initial state $s$.  Let $\Sigma_\Mm$ be the set of all
strategies.  We say that a strategy $\f$ is \emph{pure} if $\f(r)$ is
a point distribution for all runs $r \in \FRuns$ and we say that
$\f$ is \emph{positional} if $\last(r) = \last(r')$ implies
$\f(r) = \f(r')$ for all runs $r, r' \in \FRuns$.

The behaviour of an MDP $\Mm$ under a strategy $\f$ with starting state
$s$ is defined on a probability space
$(\Runs^\f_s, \Ff^\f_s, \Pr^\f_s)$ over the set of infinite runs of
$\f$ from $s$.  Given a random variable over the set of infinite runs
$f :\Runs \to \Real$, we write $\eE^\f_s \set{f}$ for the expectation
of $f$ over the runs of $\Mm$ from state $s$ that follow strategy
$\f$.

Given an MDP $\Mm$ and an automaton
$\Aa = \langle \Sigma, Q, q_0, \Delta, \Gamma \rangle$, we want to
compute an optimal strategy satisfying the objective that the run of
$\Mm$ is in the language of $\Aa$.  We define the semantic
satisfaction probability for $\Aa$ and a strategy $\f$ from state $s$
as:
\begin{align*}
\PSemSat^\Mm_{\Aa}(s, \f) &= \Pr{}_s^\f \{ r {\in} \Runs^\f_s :
  L(r) {\in} L_{\mathcal{A}} \} \text{ and} &
  \PSemSat^\Mm_{\Aa}(s) &= \sup_{\f}\big(\PSemSat^\Mm_{\Aa}(s, \f)\big)
  \,.
\end{align*}
When using automata for the analysis of MDPs, we need a syntactic
variant of the acceptance condition.  Given an MDP
$\Mm = (S, A, T, \Sigma, L)$ with initial state $s_0 \in S$
and an automaton
$\mathcal{A} = \langle \Sigma, Q, q_0, \Delta, \Gamma \rangle$, the
\emph{product}
$\Mm \times \mathcal{A} = (S \times Q, (s_0,q_0), A \times Q,
T^\times, \Gamma^\times)$ is an MDP augmented with an initial state $(s_0,q_0)$
and accepting transitions $\Gamma^\times$.  The function
$T^\times : (S \times Q) \times (A \times Q) \pto \DIST(S \times Q)$
is defined by
\begin{equation*}
  T^\times((s,q),(a,q'))(({s}',{q}')) =
  \begin{cases}
    T(s,a)({s}') & \text{if } (q,L(s,a,{s}'),{q}') \in \Delta \\
    0 & \text{otherwise.}
  \end{cases}
\end{equation*}
Finally,
$\Gamma^\times \subseteq (S \times Q) \times (A \times Q) \times (S
\times Q)$ is defined by $((s,q),(a,q'),(s',q')) \in \Gamma^\times$
if, and only if, $(q,L(s,a,s'),q') \in \Gamma$ and $T(s,a)(s') > 0$.
A strategy $\f$ on the MDP defines a strategy $\f^\times$ on the
product, and vice versa.  We define the syntactic satisfaction
probabilities as
\begin{align*}
  \PSat^\Mm_{\Aa}((s,q), \f^\times) &= \Pr{}_s^\f \{ r \in
  \Runs^{\f^\times}_{(s,q)}(\Mm\times\Aa) : \inf(r) \cap \Gamma^\times
  \neq \emptyset \} \enspace, ~~~~\text{ and} \\
  \PSat^\Mm_{\Aa}(s) &= \sup_{\f^\times}\big(\PSat^\Mm_{\Aa}((s,q_0),
  \f^\times)\big) \enspace.
\end{align*}
Note that $\PSat^\Mm_{\Aa}(s) = \PSemSat^\Mm_{\Aa}(s)$ holds for a
deterministic $\Aa$.  In general,
$\PSat^\Mm_{\Aa}(s)$ $\leq \PSemSat^\Mm_{\cal A}(s)$ holds, but equality
is not guaranteed because the optimal resolution of nondeterministic
choices may require access to future events.

\begin{definition}[GFM automata \cite{Hahn20}]
  \label{def:gfm}
  An automaton $\Aa$ is \emph{good for MDPs} if, for all
  MDPs $\Mm$, $\PSat^\Mm_{\Aa}(s_0) = \PSemSat^\Mm_{\Aa}(s_0)$ holds,
  where $s_0$ is the initial state of $\Mm$.
\end{definition}
For an automaton to match $\PSemSat^\Mm_{\Aa}(s_0)$,
its nondeterminism is restricted not to rely heavily on the future;
rather, it must be possible to resolve the nondeterminism on-the-fly.

\section{Undiscounted Reward Shaping}
\label{shape}
We build on the reduction from \cite{Hahn19,Hahn20} that reduces maximising the chance to realise an $\omega$-regular objective given by a good-for-MDPs B\"uchi automaton $\Aa$ for an MDP $\Mm$ to maximising the chance to meet the reachability objective in the augmented MDP ${\Mm}^\zeta$ (for $\zeta \in ]0,1[$) obtained from ${\Mm} \times {\Aa}$ by
\begin{itemize}
 \item adding a new target state $t$ (either as a sink with a self-loop or as a point where the computation stops; we choose here the latter view) and
 \item by making the target $t$ a destination of each accepting transition $\tau$ of ${\Mm} \times {\Aa}$ with probability $1-\zeta$ and
 \newline multiplying the original probabilities of all other destinations of an accepting transition $\tau$ by $\zeta$.
\end{itemize}

Let
\begin{align*}
  \PSat^{\Mm^\zeta}_{t}((s,q), \f) &= \Pr{}_s^\f \{ r \in
  \Runs^{\f}_{(s,q)}(\Mm^\zeta) : r \mbox{ reaches } t \} \enspace, ~~~~\text{ and} \\
  \PSat^{\Mm^\zeta}_{t}(s) &= \sup_{\f}\big(\PSat^{\Mm^\zeta}_{t}((s,q_0),
  \f)\big) \enspace.
\end{align*}

\begin{theorem}[\cite{Hahn19,Hahn20}]
\label{theo:reach}
The following holds:
\begin{enumerate}
 \item 
 ${\Mm}^\zeta$ (for $\zeta \in ]0,1[$) and ${\Mm} \times {\Aa}$ have the same set of strategies.
 \item
 For a strategy $\f$, the chance of reaching the target $t$ in ${\Mm}^\zeta_\f$ is $1$ if, and only if, the chance of satisfying the B\"uchi objective in $({\Mm} \times {\Aa})_\f$ is $1$:
 
 $\PSat^{\Mm^\zeta}_{t}((s_0,q_0), \f) = 1 \; \Leftrightarrow \; \PSat^{\Mm}_{\Aa}(s_0,q_0), \f) = 1$
 \item
 There is a $\zeta_0 \in ]0,1[$ such that, for all $\zeta \in [\zeta_0,1[$, an optimal reachability strategy $\f$ for  ${\Mm}^\zeta$ is an optimal strategy for satisfying the B\"uchi objective in ${\Mm} \times {\Aa}$:
 
 $\PSat^{\Mm^\zeta}_{t}((s_0,q_0),\f)  = \PSat^{\Mm^\zeta}_{t}(s_0) \; \Rightarrow \; \PSat^{\Mm}_{\Aa}(s_0,q_0), \f) = \PSat^{\Mm}_{\Aa}(s_0))$.
\end{enumerate}
\end{theorem}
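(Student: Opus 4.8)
The plan is to prove the three parts in turn, with one coupling identity doing the heavy lifting for parts 2 and 3.

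\emph{Parts 1 and 2.} Part 1 is immediate from the construction: $\Mm^\zeta$ has exactly the states, actions and transition supports of $\Mm\times\Aa$, plus the fresh, choice-free terminal state $t$; rescaling a positive probability $p$ to $\zeta p$ with $\zeta\in\,]0,1[$ changes no support, and accepting transitions merely send the residual mass $1-\zeta$ to $t$. Hence a finite run of $\Mm^\zeta$ not yet at $t$ is a finite run of $\Mm\times\Aa$ and conversely, so the two MDPs have the same strategies. For Part 2 I would set up the obvious coupling: run $\Mm\times\Aa$ under $\f$ and, independently, at the $n$-th accepting transition traversed flip a \emph{divert} coin of bias $1-\zeta$; the $\Mm^\zeta$-run follows the $\Mm\times\Aa$-run until the first \emph{divert}, then jumps to $t$. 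Writing $N\in\Nat\cup\{\infty\}$ for the number of accepting transitions of the $\Mm\times\Aa$-run and $\zeta^{\infty}:=0$, this gives the identity $\PSat^{\Mm^\zeta}_{t}((s_0,q_0),\f)=\eE^{\f}_{s_0}\set{1-\zeta^{N}}$. Then $\PSat^{\Mm^\zeta}_{t}((s_0,q_0),\f)=1$ iff $\zeta^{N}=0$ almost surely iff $N=\infty$ almost surely iff the run of $(\Mm\times\Aa)_\f$ is accepting almost surely, i.e. $\PSat^{\Mm}_{\Aa}((s_0,q_0),\f)=1$; for the non-trivial implication note that on $\{N<\infty\}$ one has $\zeta^{N}>0$, so positive probability of $\{N<\infty\}$ already forces $\eE^{\f}_{s_0}\set{1-\zeta^{N}}<1$.

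\emph{Part 3.} Put $v^{\ast}:=\PSat^{\Mm}_{\Aa}(s_0)$ and $r_\g(\zeta):=\PSat^{\Mm^\zeta}_{t}((s_0,q_0),\g)$. A B\"uchi-optimal, and without loss positional, strategy has $N=\infty$ with probability $v^{\ast}$, so by the identity $r_\g(\zeta)\ge v^{\ast}$ for it, whence $\PSat^{\Mm^\zeta}_{t}(s_0)\ge v^{\ast}$ for every $\zeta$. Since reachability on a finite MDP has positional optimal strategies, $\PSat^{\Mm^\zeta}_{t}(s_0)=\max_\g r_\g(\zeta)$ over the finite set of positional $\g$. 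For a fixed positional $\g$, decomposing the runs of the Markov chain $(\Mm\times\Aa)_\g$ by their bottom strongly connected component and applying the identity gives $r_\g(\zeta)=q_\g+\eE^{\g}_{s_0}\set{(1-\zeta^{N})\,;\,N<\infty}$ with $q_\g:=\PSat^{\Mm}_{\Aa}((s_0,q_0),\g)$; the correction term is non-negative, pathwise non-increasing in $\zeta$, and tends to $0$ as $\zeta\uparrow 1$ by dominated convergence, so $r_\g(\zeta)\downarrow q_\g$. As the finitely many values $q_\g<v^{\ast}$ are bounded away from $v^{\ast}$, there is $\zeta_0\in\,]0,1[$ with: for all $\zeta\in[\zeta_0,1[$ every positional $\g$ with $q_\g<v^{\ast}$ satisfies $r_\g(\zeta)<v^{\ast}\le r_{\g^{\ast}}(\zeta)$ for any B\"uchi-optimal positional $\g^{\ast}$. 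Hence for $\zeta\ge\zeta_0$ the optimal reachability value of $\Mm^\zeta$ is attained only by B\"uchi-optimal positional strategies; in particular every positional optimal reachability strategy is B\"uchi-optimal.

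\emph{History-dependent strategies, and the main obstacle.} To upgrade this to arbitrary optimal reachability strategies $\f$ I would use two facts. First, an optimal reachability strategy attains $V_\zeta(s,q):=\PSat^{\Mm^\zeta}_{t}((s,q))$ from every state it reaches with positive probability; on the almost-sure B\"uchi winning region $W$ one has $V_\zeta\equiv 1$, and applying Part 2 to the sub-MDP rooted at $(s,q)\in W$ shows that reaching $t$ almost surely from $(s,q)$ is the \emph{same} as winning B\"uchi almost surely from $(s,q)$, so $\f$ wins B\"uchi almost surely once it enters $W$ and therefore $\PSat^{\Mm}_{\Aa}((s_0,q_0),\f)=\Pr^{\f}_{s_0}[\text{reach }W]$. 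Second, any accepting transition with both endpoints in the complement of $W$ lies in no end component of $\Mm\times\Aa$, so the expected number of such transitions taken is bounded by a constant $K$ depending only on $\Mm\times\Aa$; consequently every strategy harvests at most $(1-\zeta)K$ reach-$t$ probability while outside $W$, which also gives $\PSat^{\Mm^\zeta}_{t}(s_0)\le v^{\ast}+(1-\zeta)K$. Combining these bounds with the positional analysis above shows that for $\zeta$ close to $1$ any optimal reachability strategy must reach $W$ — equivalently satisfy the B\"uchi condition — with probability $v^{\ast}$. I expect the genuinely delicate point to be making ``$\zeta$ close to $1$'' uniform over the continuum of history-dependent strategies: one has to argue that being reachability-optimal means matching the optimal value, and that past a single MDP-dependent threshold the (slowly shrinking) optimal value is matched only by strategies of full B\"uchi value $v^{\ast}$, since any behaviour putting positive weight on B\"uchi-suboptimal play falls strictly below it. Everything else reduces to Part 2 and standard finite-MDP facts — positional optimality for reachability, the MEC decomposition, and exponentially fast escape from the transient part.
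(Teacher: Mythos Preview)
The paper does not prove Theorem~\ref{theo:reach}: it is quoted from \cite{Hahn19,Hahn20} and used as a black box in the proof of Theorem~\ref{theo:undiscounted}. There is therefore no in-paper argument to compare your proposal against.

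For what it is worth, your reconstruction is essentially the argument given in the cited sources. The coupling identity $\PSat^{\Mm^\zeta}_{t}((s_0,q_0),\f)=\eE^{\f}_{s_0}\set{1-\zeta^{N}}$ is the right lever, and Parts~1 and~2 follow from it exactly as you say. For Part~3 your positional analysis (finitely many positional strategies, each with $r_\g(\zeta)\downarrow q_\g$, so a single threshold $\zeta_0$ separates B\"uchi-optimal from B\"uchi-suboptimal ones) is correct and is the core of the proof in~\cite{Hahn19}. The step you flag as delicate---upgrading to history-dependent optimal strategies via the almost-sure winning region $W$ and a uniform bound $K$ on the expected number of accepting transitions taken outside $W$---is also correct in outline; the existence of such a $K$ is a standard (if not entirely trivial) consequence of the end-component analysis for finite MDPs, since any accepting transition with both endpoints outside $W$ lies in no end component and such transitions have uniformly bounded expected occupation. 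It would be worth promoting that fact to an explicit lemma rather than leaving it as a parenthetical remark.
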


This allows for analysing the much simpler reachability objective in ${\Mm}^\zeta_\f$ instead of the B\"uchi objective in ${\Mm} \times {\Aa}$, and is open to implementation in model free reinforcement learning.

However, it has the drawback that rewards occur late when $\zeta$ is close to $1$.
We amend that by the following observation:

We build, for a good-for-MDPs B\"uchi automaton $\Aa$ and an MDP $\Mm$, the augmented MDP $\overline{\Mm}^\zeta$ (for $\zeta \in ]0,1[$) obtained from ${\Mm} \times {\Aa}$ in the same way as ${\Mm}^\zeta$, i.e.\ by
\begin{itemize}
 \item adding a new sink state $t$ (as a sink where the computation stops) and
 \item by making the sink $t$ a destination of each accepting transition $\tau$ of ${\Mm} \times {\Aa}$ with probability $1-\zeta$ and
 \newline multiplying the original probabilities of all other destinations of an accepting transition $\tau$ by $\zeta$.
\end{itemize}
Different to ${\Mm}^\zeta$, $\overline{\Mm}^\zeta$ has an undiscounted reward objective, where taking an accepting (in ${\Mm} \times {\Aa}$) transition $\tau$ provides a reward of $1$, regardless of whether it leads to the sink $t$ or stays in the state-space of ${\Mm} \times {\Aa}$.

Let, for a run $r$ of ${\Mm}^\zeta$ that contains $n \in {\mathbb N}_0 \cup \{\infty\}$ accepting transitions, the total reward be $\Total(r) = n$, and let
\begin{align*}
\ETotal^{\overline{\Mm}^\zeta}((s,q), \f) &= 
{\mathbb E}_s^\f \{ \Total(r) : r \in \Runs^{\f}_{(s,q)}(\overline{\Mm}^\zeta) \}   \enspace, ~~~~\text{ and} \\
  \ETotal^{\overline{\Mm}^\zeta}(s) &= \sup_{\f}\big(\ETotal^{\overline{\Mm}^\zeta}((s,q_0),\f)\big) \enspace.
\end{align*}

Note that the set of runs with $\Total(r) = \infty$ has probability $0$ in $\Runs^{\f}_{(s,q)}(\overline{\Mm}^\zeta)$:
they are the runs that infinitely often do not move to $t$ on an accepting transition, where the chance that this happens at least $n$ times is $(1-\zeta)^n$ for all $n \in {\mathbb N}_0$.

\begin{theorem}
\label{theo:undiscounted}
The following holds:
\begin{enumerate}
 \item 
 $\overline{\Mm}^\zeta$ (for $\zeta \in ]0,1[$), ${\Mm}^\zeta$ (for $\zeta \in ]0,1[$), and ${\Mm} \times {\Aa}$ have the same set of strategies.
 \item
 For a strategy $\f$, the expected reward for $\overline{\Mm}^\zeta_\f$ is $r$ if, and only if, the chance of reaching the target $t$ in ${\Mm}^\zeta_\f$ is $\frac{r}{1-\zeta}$:
 
 $\PSat^{\Mm^\zeta}_{t}((s_0,q_0), \f) = (1-\zeta) \ETotal^{\overline{\Mm}^\zeta}((s_0,q_0),\f)$.
 \item The expected reward for $\overline{\Mm}^\zeta_\f$ is in $[0,\frac{1}{1-\zeta}]$.
 \item The chance of satisfying the B\"uchi objective in $({\Mm} \times {\Aa})_\f$ is $1$ if, and only if, the expected reward for $\overline{\Mm}^\zeta_\f$ is $\frac{1}{1-\zeta}$.
 \item
 There is a $\zeta_0 \in ]0,1[$ such that, for all $\zeta \in [\zeta_0,1[$, a strategy $\f$ that maximises the reward for $\overline{\Mm}^\zeta$ is an optimal strategy for satisfying the B\"uchi objective in ${\Mm} \times {\Aa}$.
\end{enumerate}
\end{theorem}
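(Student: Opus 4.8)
The plan is to prove everything from the single identity asserted in part~2, namely $\PSat^{\Mm^\zeta}_{t}((s_0,q_0),\f) = (1-\zeta)\,\ETotal^{\overline{\Mm}^\zeta}((s_0,q_0),\f)$, and then obtain parts~1, 3, 4 and~5 by combining it with Theorem~\ref{theo:reach}. Part~1 is immediate: $\overline{\Mm}^\zeta$ is built from ${\Mm}\times{\Aa}$ by exactly the modification of the transition structure that produces ${\Mm}^\zeta$ --- a fresh sink $t$ reached with probability $1-\zeta$ on every accepting transition, the other destination probabilities scaled by $\zeta$ --- and the two MDPs differ only in the objective placed on top (reaching $t$ versus accumulating reward on accepting transitions), which leaves the states, actions and available moves untouched. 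Hence $\overline{\Mm}^\zeta$ and ${\Mm}^\zeta$ have the same strategies, and with Theorem~\ref{theo:reach}(1) so does ${\Mm}\times{\Aa}$.

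For part~2 I would fix $\f$ and work in the common probability space of its runs on $\overline{\Mm}^\zeta$, which coincides with that on ${\Mm}^\zeta$ since the dynamics are identical. For $n\ge 1$ let $A_n$ be the (measurable) event that the run takes an $n$-th accepting transition. As $\Total(r)<\infty$ almost surely (as already observed), we have $\Total = \sum_{n\ge 1}\mathbf{1}_{A_n}$ pointwise a.s., so monotone convergence gives $\ETotal^{\overline{\Mm}^\zeta}((s_0,q_0),\f)=\sum_{n\ge1}\Pr[A_n]$. On the reachability side, a run of $\overline{\Mm}^\zeta$ (equivalently ${\Mm}^\zeta$) can reach $t$ only on an accepting transition and stops once it does so; hence the event ``reach $t$'' is the disjoint union over $n\ge 1$ of ``$A_n$ and the $n$-th accepting transition branches to $t$''. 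Given $A_n$, the destination of the $n$-th accepting transition is decided by a coin that is independent of the history up to that transition and yields $t$ with probability $1-\zeta$ (directly from the construction of $\overline{\Mm}^\zeta$), so $\Pr[A_n \text{ and the }n\text{-th branches to }t]=(1-\zeta)\Pr[A_n]$. Summing over $n$ gives $\PSat^{\Mm^\zeta}_{t}((s_0,q_0),\f)=(1-\zeta)\sum_{n\ge1}\Pr[A_n]=(1-\zeta)\,\ETotal^{\overline{\Mm}^\zeta}((s_0,q_0),\f)$, which is the biconditional stated in part~2.

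The other parts follow. Part~3: $\PSat^{\Mm^\zeta}_{t}((s_0,q_0),\f)$ is a probability and hence lies in $[0,1]$, so by part~2 the expected reward lies in $[0,\tfrac{1}{1-\zeta}]$. Part~4: by part~2 the expected reward equals $\tfrac1{1-\zeta}$ iff $\PSat^{\Mm^\zeta}_{t}((s_0,q_0),\f)=1$, and by Theorem~\ref{theo:reach}(2) the latter is equivalent to the run of $({\Mm}\times{\Aa})_\f$ satisfying the B\"uchi condition with probability~$1$. Part~5: since $\ETotal^{\overline{\Mm}^\zeta}$ and $\PSat^{\Mm^\zeta}_{t}$ differ only by the fixed positive factor $1-\zeta$, a strategy maximises the reward of $\overline{\Mm}^\zeta$ exactly when it is an optimal reachability strategy for ${\Mm}^\zeta$; such strategies exist (this is a finite-state reachability problem, which even admits optimal positional strategies), and taking $\zeta_0$ as in Theorem~\ref{theo:reach}(3), for every $\zeta\in[\zeta_0,1[$ each of them is an optimal strategy for the B\"uchi objective in ${\Mm}\times{\Aa}$.

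The only genuine work is part~2, and within it the delicate point is purely measure-theoretic: defining the events $A_n$ and ``the $n$-th accepting transition branches to $t$'' on the run space, checking that they are measurable and pairwise disjoint, and justifying that conditioning on $A_n$ the branching coin is still Bernoulli with parameter $1-\zeta$ --- which is precisely where the independence of the new $1-\zeta$ branch from the original dynamics is used --- together with the exchange of expectation and the infinite sum, for which monotone convergence suffices. Everything else is immediate bookkeeping on top of Theorem~\ref{theo:reach}.
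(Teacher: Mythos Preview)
Your proposal is correct and follows essentially the same overall architecture as the paper: establish the identity in part~2 and then read off parts~3--5 from it together with Theorem~\ref{theo:reach}, with part~1 being immediate from the construction.

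The only noteworthy difference is in the argument for part~2. You decompose both quantities as sums over $n$: $\ETotal=\sum_{n\ge 1}\Pr[A_n]$ via $\Total=\sum_{n\ge 1}\mathbf{1}_{A_n}$, and the reachability probability as $\sum_{n\ge 1}\Pr[A_n\text{ and the }n\text{-th accepting transition goes to }t]$, then match terms using the independence of the $(1-\zeta)$-coin. The paper instead observes that, because $t$ is a sink visited at most once, $\PSat^{\Mm^\zeta}_{t}((s_0,q_0),\f)$ equals the \emph{expected number of visits to $t$}, and that each accepting transition contributes exactly $(1-\zeta)$ to this expected count; hence the expected number of visits to $t$ is $(1-\zeta)$ times the expected number of accepting transitions taken. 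The two arguments are really the same linearity-of-expectation computation viewed from opposite ends: yours makes the measure-theoretic bookkeeping (measurability, disjointness, monotone convergence, conditional independence) explicit, while the paper's ``expected visits $=$ probability'' shortcut packages all of that into a single sentence. Your version is longer but more self-contained; the paper's is terser but assumes the reader is comfortable filling in exactly the details you spell out.
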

\begin{proof}
(1) Obvious, because all the states and their actions are the same apart from the sink state $t$ for which the strategy can be left undefined.

(2) The sink state $t$ can only be visited once along any run, so the expected number of times a run starting at $(s_0,q_0)$ is going to visit $t$ while using strategy $\f$ is the same as its probability of visiting $t$, i.e., $\PSat^{\Mm^\zeta}_{t}((s_0,q_0), \f)$.
The only way $t$ can be reached is by traversing an accepting transition and this always happens with the same probability $(1-\zeta)$.
So the expected number of visits to $t$ is
the expected number of times an accepting transition is used, i.e., 
$\ETotal^{\overline{\Mm}^\zeta}((s_0,q_0),\f)$, multiplied by $(1-\zeta)$.

(3) follows from (2), because $\PSat^{\Mm^\zeta}_{t}((s_0,q_0), \f)$ cannot be greater than 1.

(4) follows from (2) and Theorem \ref{theo:reach} (2).

(5) follows from (2) and Theorem 
\ref{theo:reach} (3).
\end{proof}

\section{Discounted Reward Shaping}
The expected undiscounted reward for $\overline{\Mm}^\zeta_\f$ can be viewed as a discounted reward for $({\Mm} \times {\Aa})_\f$, by giving a reward $\zeta^i$ to when passing through an accepting transition when $i$ \emph{accepting} transitions have been passed before. We call this reward \emph{$\zeta$-biased}.

Let, for a run $r$ of $\Mm \times \Aa$ that contains $n \in {\mathbb N}_0 \cup \{\infty\}$ accepting transitions, the $\zeta$-biased discounted reward be $\Disct_\zeta(r) = \sum_{i=0}^{n-1} \zeta^i$, and let
\begin{align*}
\EDisct^{\Mm \times \Aa}_\zeta((s,q), \f) &= {\mathbb E}_s^\f \{ r \in
  \Runs^{\f}_{(s,q)}(\Mm \times \Aa) : \Disct_\zeta(r) \}   \enspace, ~~~~\text{ and} \\
  \EDisct^{\Mm \times \Aa}_\zeta(s) &= \sup_{\f}\big(\EDisct^{\Mm \times \Aa}_\zeta((s,q_0),
  \f)\big) \enspace.
\end{align*}

\begin{theorem}
 For every strategy $\f$, the expected reward for $\overline{\Mm}^\zeta_\f$ is equal to the expected $\zeta$-biased reward for $({\Mm} \times {\Aa})_\f$:
 $\EDisct^{\Mm \times \Aa}_\zeta((s,q), \f)=\ETotal^{\overline{\Mm}^\zeta}((s,q), \f)$. 
\end{theorem}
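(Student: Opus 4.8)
The plan is to express each of the two expectations as a sum, over the index $k$ of accepting transitions, of a suitably weighted ``survival'' probability, and then match the two sums term by term. Let $n(r) \in \Nat_0 \cup \{\infty\}$ denote the number of accepting transitions traversed by a run $r$ (in whichever of the two models $r$ lives). Since $\Total(r) = n(r) = \sum_{k \geq 1} \mathbf{1}[n(r) \geq k]$ and $\Disct_\zeta(r) = \sum_{i=0}^{n(r)-1}\zeta^i = \sum_{k\geq 1}\zeta^{k-1}\,\mathbf{1}[n(r)\geq k]$, Tonelli's theorem (all summands are nonnegative) gives
\begin{align*}
\ETotal^{\overline{\Mm}^\zeta}((s,q),\f) &= \sum_{k\geq 1}\Pr{}_s^\f\{\, r \in \Runs^\f_{(s,q)}(\overline{\Mm}^\zeta) : n(r)\geq k \,\}, \\
\EDisct^{\Mm\times\Aa}_\zeta((s,q),\f) &= \sum_{k\geq 1}\zeta^{k-1}\,\Pr{}_s^\f\{\, r\in\Runs^\f_{(s,q)}(\Mm\times\Aa) : n(r)\geq k \,\}.
\end{align*}
Hence it suffices to establish, for every $k \geq 1$, the identity $\Pr{}_s^\f\{n(r) \geq k \text{ in } \overline{\Mm}^\zeta\} = \zeta^{k-1}\cdot\Pr{}_s^\f\{n(r)\geq k \text{ in }\Mm\times\Aa\}$.

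To prove this identity I would decompose the event $\{n(r)\geq k\}$ according to the unique finite prefix $\rho$ of $r$ that ends exactly with the $(k-1)$-th accepting transition (the trivial prefix consisting only of $(s,q)$ when $k=1$). These prefixes form a prefix-free set, their cylinders partition $\{n(r)\geq k-1\}$ in $\Mm\times\Aa$, and $\{n(r)\geq k\}$ is the disjoint union over such $\rho$ of $[\rho] \cap \{\text{at least one further accepting transition is taken}\}$. Two book-keeping facts then finish it: (i) the only transition probabilities that differ between $\Mm\times\Aa$ and $\overline{\Mm}^\zeta$ are those on accepting transitions, and along $\rho$ the non-$t$ (``surviving'') branch of each of its $k-1$ accepting transitions carries exactly $\zeta$ times the corresponding probability in $\Mm\times\Aa$, so $\Pr_{\overline{\Mm}^\zeta}[\text{follow }\rho] = \zeta^{k-1}\,\Pr_{\Mm\times\Aa}[\text{follow }\rho]$; and (ii) an accepting transition that is redirected to $t$ still counts as \emph{taken}, and the total probability mass of ``the chosen action leads to an accepting $\Mm\times\Aa$-transition to the intended destination, whether or not redirected to $t$'' equals the unmodified $\Mm\times\Aa$ transition probability, so from the end of $\rho$ up to and including the next accepting transition the two models have identical one-step kernels; consequently $\Pr_{\overline{\Mm}^\zeta}[\geq 1\text{ further accepting transition} \mid [\rho]] = \Pr_{\Mm\times\Aa}[\geq 1\text{ further accepting transition} \mid [\rho]]$. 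Multiplying (i) by (ii) and summing over $\rho$ yields the displayed identity, hence the theorem.

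The step I expect to require the most care is making (ii) rigorous in the presence of history-dependent strategies: ``the chance of taking one more accepting transition after $\rho$'' must be written as a conditional probability given the cylinder $[\rho]$, so that the shifted strategy $r' \mapsto \f(\rho\cdot r')$ is the one in force, and one must check that the path-space measures of the two models (built by Ionescu--Tulcea from the respective kernels and the common strategy) are genuinely consistent under the ``follow $\rho$, then possibly redirect to $t$'' correspondence. A secondary point worth recording is that $\{n(r)=\infty\}$ has probability $0$ in $\overline{\Mm}^\zeta$ (each accepting transition independently redirects to $t$ with probability $1-\zeta$), so $\Total$ and $\Disct_\zeta$ agree almost surely even though they differ on that null set, and that $\Disct_\zeta \le \tfrac{1}{1-\zeta}$ so no integrability issue arises. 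An alternative, arguably cleaner, route is an explicit coupling: run $\Mm\times\Aa$ under $\f$ together with an independent i.i.d.\ $\mathrm{Bernoulli}(1-\zeta)$ sequence $(X_i)_{i\ge 1}$ and redirect to $t$ at the first accepting transition whose coin shows $1$; this pushes the product measure forward to the law of $\overline{\Mm}^\zeta$ under $\f$, and with $\tau = \min\{i : X_i = 1\}$ one gets $\ETotal^{\overline{\Mm}^\zeta} = \eE\big[\min(n(r),\tau)\big] = \eE\big[\sum_{k=1}^{n(r)}\zeta^{k-1}\big] = \EDisct^{\Mm\times\Aa}_\zeta$, since $\Pr[\tau \ge k] = \zeta^{k-1}$ and $\tau$ is independent of the underlying $\Mm\times\Aa$-run $r$.
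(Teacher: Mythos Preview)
Your proposal is correct and rests on the same core observation as the paper: the discount factor $\zeta^{k-1}$ attached to the $k$-th accepting transition in $\Mm\times\Aa$ is precisely the probability of not yet having been redirected to $t$ in $\overline{\Mm}^\zeta$, so the expected contributions match term by term. The paper's own argument is a single-sentence remark to this effect, so your tail-sum decomposition and the explicit coupling with an i.i.d.\ $\mathrm{Bernoulli}(1-\zeta)$ sequence are a substantial rigorisation of that intuition rather than a genuinely different route.
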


This is simply because the discounted reward for each transition is equal to the chance of not having reached $t$ before (and thus still seeing this transition) in $\overline{\Mm}^\zeta_\f$.

This improves over \cite{DBLP:journals/corr/abs-1909-07299} because it only uses one discount parameter, $\zeta$, instead of two (called $\gamma$ and $\gamma_B$ in \cite{DBLP:journals/corr/abs-1909-07299}) parameters (that are not independent). 
It is also simpler and more intuitive: discount whenever you have earned a reward.

\end{document}